\numberwithin{equation}{section}
\theoremstyle{remark}
\newtheorem{remark}{Remark}
\newtheoremstyle{mytheorem}{0.5cm}{0.2cm}{\slshape}{ }{\bfseries}{.}{ }{}
\theoremstyle{mytheorem}
\newtheorem{theorem}{Theorem}[section]
\newtheorem{corollary}[theorem]{Corollary}
\newcommand{\ti}{{\to \infty}}
\newcommand{\Prob}[1]{\mathbf{P}\left\{#1\right\}}
\newcommand{\E}{\mathbf{E}}
\newcommand{\Ind}{\mathbf{1}}
\newcommand{\standardspace}[1]{\mathbb{#1}}
\newcommand{\R}{\standardspace{R}}
\newcommand{\Sphere}{{\standardspace{S}^{d-1}}}
\newcommand{\Ball}{{\standardspace{B}^d}}
\newcommand{\thf}{\frac{1}{2}}
\newcommand{\Ht}{\tilde{H}}
\newcommand{\D}{\Delta_{nt}}
\begin{document}

\bibliographystyle{plain}

\title{Asymptotic Normality of \\$U$-Quantile-Statistics}

\author{Michael Mayer\footnote{Department of Mathematical
    Statistics and Actuarial Science, University of Bern,
    Alpeneggstrasse 22, CH-3012 Bern, Switzerland.}}
\date{}
\bibliographystyle{plain}
\maketitle

\begin{abstract}
    In 1948, W. Hoeffding introduced a large class of unbiased
    estimators called $U$-statistics, defined as the average
    value of a real-valued $m$-variate function $h$ calculated at all
    possible sets of $m$ points from a random sample.
    In the present paper, we investigate the corresponding robust
    analogue which we call $U$-quantile-statistics.
    We are concerned with the asymptotic behavior of the
    sample $p$-quantile of such function $h$ instead of its average.
    Alternatively, $U$-quantile-statistics can be viewed as quantile
    estimators for a certain class of dependent random variables.
    Examples are given by a slightly modified Hodges-Lehmann estimator of location
    and the median interpoint distance among random points in space.

  \noindent Keywords: robust, $U$-statistics, $U$-max-statistics, dependent, sample quantile, Hodges-Lehmann

\end{abstract}

\section{Introduction}
$U$-statistics form a very important class of unbiased estimators
for distributional properties such as moments or Spearman's rank
correlation. A $U$-statistic of degree $m$ with symmetric kernel $h$
is a function of the form \linebreak
\begin{equation}\label{eq:u-statisic}
    U_n(\xi_1, \dots, \xi_n) = {n \choose m}^{-1} \sum_J h(\xi_{i_1}, \cdots,
    \xi_{i_m}),
\end{equation}
where the sum is over $J = \{(i_1, \dots, i_m)\!: 1 \leq i_1 <\dots
< i_m \leq n\}$, $\xi_1, \dots, \xi_n$ are random elements in a
measurable space $\mathcal{S}$, and $h$ is a real-valued Borel
function on $\mathcal{S}^m$, symmetric in its $m$ arguments. In his
seminal paper, Hoeffding~\cite{hoeffding48} defined $U$-statistics
for not necessarily symmetric kernels and for random points in
$d$-dimensional Euclidean space $\R^d$. Later the concept was
extended to arbitrary measurable spaces. Since 1948, most of the
classical asymptotic results for sums of i.i.d.\ random variables
have been formulated in the setting of $U$-statistics, such as
central limit laws, strong laws of large numbers, Berry-Ess\'{e}en
type bounds, and laws of the iterated logarithm.

In this article we replace the average in (\ref{eq:u-statisic})
by the sample $p$th quantile $\Ht_{pn}$ and study its asymptotic distribution.
By e.g.~replacing the average by the median ($p = 1/2$), ordinary $U$-statistics
are robustified in a natural way.

For any distribution function $F$, the
$p$th quantile, $0 < p < 1$, is given by
\begin{displaymath}
    \Ht_{p} = \inf\{x: \ F(x) \geq p\},
\end{displaymath}
which satisfies the inequality
\begin{displaymath}
    F(\Ht_{p}-) \leq p \leq F(\Ht_{p}).
\end{displaymath}
Here, the sample $p$th quantile $\Ht_{pn}$ is defined as the $p$th quantile of the empirical distribution function of the sequence
of dependent random variables
\begin{equation}\label{eq:sequence}
    \left\{h(\xi_{i_1}, \dots, \xi_{i_m}), \quad 1 \leq i_1 < i_2 < \dots < i_m \leq n\right\},
\end{equation}
i.e.~$\Ht_{pn}$ is a value that separates the lowest $100p\%$ random variables in (\ref{eq:sequence}) from the rest.

Under mild smoothness conditions on the distribution function $F$ of $h(\xi_{i_1}, \dots, \xi_{i_m})$, we proof asymptotic normality for this class of estimators for $0 < p < 1$. The exceptions $p = 0$ and $p = 1$, corresponding to the extreme values of the dependent sequence (\ref{eq:sequence}), were already investigated in Lao and Mayer \cite{mayer08}. For bounded kernels, they established Weibull limit laws for these so called $U$-max-statistics. Their results are mainly based on a Poisson approximation theorem for $U$-statistics, see e.g. Barbour et al. \cite{bar:hol:jan92}.

In Section~\ref{se:theorem} we present the main result of the article and discuss asymptotic relative efficiency of a general $U$-quantile-statistic with respect to the corresponding ordinary $U$-statistic. The proof of the main result is shown in Section~\ref{se:proof}.
In Section~\ref{se:example} we apply our results to show asymptotic normality for both a modification and a generalization of the well-known Hodges-Lehmann estimator of location. As a second application, we describe the limiting behavior of the median interpoint distance among a random sample of points in Euclidean space.

\newpage
\section{Asymptotic normality} \label{se:theorem}

Asymptotic normality of $\Ht_{pn}$ is stated in the main result of this article.

\begin{theorem}\label{th:main}
    Let $\xi_1, \dots, \xi_n$ be i.i.d.\ $\mathcal{S}$-valued random elements
    and \linebreak $h\!: \mathcal S^m \rightarrow \R$ a symmetric Borel
    function. Assume that the distribution function $F$ of $h(\xi_1, \dots, \xi_m)$
    is continuous at $\Ht_p$. Left- and right-hand derivatives of $F$ at $\Ht_p$ are denoted by
    $F'(\Ht_p-)$ and $F'(\Ht_p+)$, respectively, provided they exist. Put
    \begin{equation}\label{eq:zeta}
         \zeta = \Prob{h(\xi_1, \dots, \xi_m) \leq {\tilde H_p}, h(\xi_1, \xi_{m+1}, \dots, \xi_{2m-1}) \leq {\tilde H_p}} - p^2.
    \end{equation}
    Then, for $0 < p < 1$ and $\zeta > 0$,
    \begin{enumerate}
        \item[(i)] If there exists $F'(\Ht_p-) > 0$, then for $t < 0$,
            \begin{displaymath}
                \lim_{n\ti} \Prob{\frac{n^\thf(\Ht_{pn} - \Ht_p)}{m \zeta^\thf/F'(\Ht_p-)} \leq t} = \Phi(t).
            \end{displaymath}
        \item[(ii)] If there exists $F'(\Ht_p+) > 0$, then for $t > 0$,
            \begin{displaymath}
                \lim_{n\ti} \Prob{\frac{n^\thf(\Ht_{pn} - \Ht_p)}{m \zeta^\thf/F'(\Ht_p+)} \leq t} = \Phi(t).
            \end{displaymath}
    \end{enumerate}
\end{theorem}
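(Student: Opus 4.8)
The plan is to reduce the behavior of the sample quantile $\Ht_{pn}$ to that of an empirical-type process built from the dependent sequence (\ref{eq:sequence}), and then to control that process via Hoeffding's projection method. The starting point is the standard quantile–CDF duality: for $t<0$ and $x_{nt} := \Ht_p + t\,m\zeta^{\thf}/(n^{\thf}F'(\Ht_p-))$, one has $\Prob{n^{\thf}(\Ht_{pn}-\Ht_p)/(m\zeta^{\thf}/F'(\Ht_p-)) \le t} = \Prob{\Ht_{pn} \le x_{nt}}$, and $\Ht_{pn}\le x_{nt}$ holds essentially iff the empirical fraction of the $h$-values (\ref{eq:sequence}) that fall at or below $x_{nt}$ is at least $p$. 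Writing $G_n(x) = {n\choose m}^{-1}\sum_J \Ind\{h(\xi_{i_1},\dots,\xi_{i_m})\le x\}$ for this empirical distribution function, the event of interest becomes $\{G_n(x_{nt}) \ge p\}$ (up to the usual care with ties, handled by continuity of $F$ at $\Ht_p$). So everything hinges on a CLT for $G_n(x_{nt})$ suitably centered and scaled.

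The key observation is that for each fixed level $x$, $G_n(x)$ is itself a $U$-statistic of degree $m$ with the bounded kernel $h_x(\xi_{i_1},\dots,\xi_{i_m}) = \Ind\{h(\xi_{i_1},\dots,\xi_{i_m})\le x\}$, whose mean is $F(x)$. Hoeffding's classical CLT for $U$-statistics then gives $n^{\thf}(G_n(x)-F(x)) \to \mathcal N(0, m^2\sigma_1^2(x))$ where $m^2\sigma_1^2(x)$ is the variance coming from the first Hájek projection; at $x=\Ht_p$ this first-order projection variance is exactly $m^2\zeta$, since $\zeta$ in (\ref{eq:zeta}) is precisely $\mathrm{Cov}(\Ind\{h(\xi_1,\dots,\xi_m)\le\Ht_p\},\,\Ind\{h(\xi_1,\xi_{m+1},\dots,\xi_{2m-1})\le\Ht_p\})$, i.e.\ the variance of the conditional expectation $g_1(\xi_1) := \Prob{h(\xi_1,\xi_2,\dots,\xi_m)\le\Ht_p \mid \xi_1} $. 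The hypothesis $\zeta>0$ is exactly the non-degeneracy condition that makes the first projection term dominate, so the $n^{-\thf}$ scaling and Gaussian limit are legitimate. To pass from the fixed level $\Ht_p$ to the moving level $x_{nt}$, I would argue that $g_1(\xi_1)$ evaluated at $x_{nt}$ differs from $g_1(\xi_1)$ at $\Ht_p$ by a term whose $L^2$ size is $o(1)$ — using continuity of $F$ at $\Ht_p$ and dominated convergence — while the deterministic shift $F(x_{nt}) = p + F'(\Ht_p-)(x_{nt}-\Ht_p) + o(n^{-\thf}) = p + t m\zeta^{\thf} n^{-\thf} + o(n^{-\thf})$ supplies the mean. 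Combining, $n^{\thf}(G_n(x_{nt}) - p)$ is asymptotically $\mathcal N(t m\zeta^{\thf},\, m^2\zeta)$, so $\Prob{G_n(x_{nt})\ge p} \to 1 - \Phi(-t m\zeta^{\thf}/(m\zeta^{\thf})) = \Phi(t)$, which is the claim; part (ii) is symmetric, using the right derivative and $t>0$.

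The main obstacle, I expect, is making the moving-level argument fully rigorous: one needs a form of asymptotic equicontinuity / a uniform (in a shrinking neighborhood of $\Ht_p$) CLT for the $U$-process $x\mapsto n^{\thf}(G_n(x)-F(x))$, rather than just a pointwise CLT, so that the $o(n^{-\thf})$ remainder in the Hoeffding decomposition of $G_n(x_{nt})$ is genuinely negligible after multiplication by $n^{\thf}$. Concretely, one must show the degenerate (second and higher order) parts of the Hoeffding decomposition of $G_n(x_{nt})$ are $o_P(n^{-\thf})$ uniformly near $\Ht_p$ and that the linear part's increment between $\Ht_p$ and $x_{nt}$ is $o_P(n^{-\thf})$; this is where the continuity assumption on $F$ at $\Ht_p$ (as opposed to differentiability, which is only needed to identify the limiting variance constant) does its real work. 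A secondary, more routine nuisance is the bookkeeping around ties and the precise inequalities relating $\{\Ht_{pn}\le x\}$ to $\{G_n(x)\ge p\}$, but continuity of $F$ at $\Ht_p$ makes the boundary contributions vanish in the limit.
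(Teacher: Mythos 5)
Your reduction to the event $\{G_n(x_{nt})\ge p\}$ is exactly the paper's starting point (the duality is in fact an identity here, since the empirical distribution function is right-continuous, so the worry about ties is moot), but from there the two arguments diverge in how the normal limit for the indicator-kernel $U$-statistic at the moving level $x_{nt}$ is obtained. The paper does not decompose at all: it treats $U_n(\Delta_{nt})$ for each $n$ as a fresh $U$-statistic with bounded kernel $\Ind\{h\le \Ht_p+tAn^{-\thf}\}$ and invokes the Berry--Ess\'een theorem of Callaert and Janssen, whose error bound $C\lambda/(n^{\thf}m^3\zeta_n^{3/2})$ is uniform in the argument and vanishes because $\zeta_n\to\zeta>0$; the only remaining work is showing that the centering $c_{nt}=n^{\thf}(\Delta_{nt}-p)/(m\zeta_n^{\thf})$ converges to $t$, which is where the one-sided derivative enters. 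Your route instead fixes the level at $\Ht_p$, applies Hoeffding's CLT there, and transfers to $x_{nt}$ by controlling increments of the Hoeffding decomposition. That works, and is more self-contained (no external Berry--Ess\'een input), but you overestimate the main obstacle: since $x_{nt}$ is a deterministic sequence, no uniform-in-a-neighborhood equicontinuity is needed. The degenerate part of a bounded-kernel $U$-statistic has variance $\O(n^{-2})$ at every level, so it is $o_P(n^{-\thf})$ without any uniformity argument, and the linear part's increment is killed by the elementary bound $\E\bigl[(g_1(\xi_1;x_{nt})-g_1(\xi_1;\Ht_p))^2\bigr]\le |F(x_{nt})-F(\Ht_p)|\to 0$, which holds because the difference of the two conditional probabilities is itself a conditional probability of a shrinking slab, hence lies in $[0,1]$. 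In short: your plan is sound and fills in correctly, the paper's Berry--Ess\'een shortcut simply makes the moving-level issue you flag as the ``main obstacle'' disappear, at the price of citing a deeper theorem.
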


As an immediate consequence of Theorem~\ref{th:main}, the following result holds.
\begin{corollary}\label{co:main}
    If $F$ in Theorem \ref{th:main} possesses a density $f$ in a neighborhood of $\Ht_p$ and $f$ is positive and continuous at $\Ht_p$, then
    \begin{displaymath}
        \lim_{n\ti} \Prob{\frac{n^\thf(\Ht_{pn} - \Ht_p)}{m \zeta^\thf/f(\Ht_p)} \leq t} = \Phi(t).
    \end{displaymath}
\end{corollary}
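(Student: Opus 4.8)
The plan is to read Corollary~\ref{co:main} off Theorem~\ref{th:main}: we keep the standing hypotheses of the theorem (in particular $0<p<1$ and $\zeta>0$), and all that has to be added is (a) the observation that the density assumption makes both one-sided derivatives of $F$ at $\Ht_p$ exist and equal $f(\Ht_p)>0$, and (b) the verification that the limit relation, which Theorem~\ref{th:main} supplies only for $t\neq 0$, persists at $t=0$.

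For (a), since $F$ has a density $f$ on a neighbourhood of $\Ht_p$ we may write $F(x)-F(\Ht_p)=\int_{\Ht_p}^{x} f(u)\,du$ for all $x$ close enough to $\Ht_p$. Hence, for $x\neq\Ht_p$, the difference quotient $\bigl(F(x)-F(\Ht_p)\bigr)/(x-\Ht_p)$ equals the mean value of $f$ over the interval with endpoints $\Ht_p$ and $x$, and continuity of $f$ at $\Ht_p$ makes this mean value converge to $f(\Ht_p)$ as $x\to\Ht_p$ from either side. Thus $F'(\Ht_p-)=F'(\Ht_p+)=f(\Ht_p)$, which is positive by hypothesis; in particular $F$ is continuous at $\Ht_p$, so $\zeta$ of \eqref{eq:zeta} is unchanged and Theorem~\ref{th:main} applies with both one-sided scaling constants equal to $m\zeta^\thf/f(\Ht_p)$. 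Writing $Q_n=n^\thf(\Ht_{pn}-\Ht_p)f(\Ht_p)/(m\zeta^\thf)$ for the quantity appearing in the corollary, part~(i) then gives $\lim_{n\ti}\Prob{Q_n\le t}=\Phi(t)$ for every $t<0$ and part~(ii) the same for every $t>0$.

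For (b), fix $\eps>0$. Monotonicity gives $\Prob{Q_n\le-\eps}\le\Prob{Q_n\le 0}\le\Prob{Q_n\le\eps}$, so letting $n\ti$ and using the cases $t=-\eps$ and $t=\eps$ already proved yields $\Phi(-\eps)\le\liminf_{n\ti}\Prob{Q_n\le 0}\le\limsup_{n\ti}\Prob{Q_n\le 0}\le\Phi(\eps)$; letting $\eps\dn 0$ and using continuity of $\Phi$ forces $\lim_{n\ti}\Prob{Q_n\le 0}=\Phi(0)=\thf$. Together with parts~(i) and~(ii) this proves $\Prob{Q_n\le t}\to\Phi(t)$ for all $t\in\R$, which is the assertion. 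There is no real obstacle in this argument: the substantive work --- expressing $\{\Ht_{pn}\le x\}$ as a statement about the bounded-kernel $U$-statistic ${n\choose m}^{-1}\sum_J\Ind\{h(\xi_{i_1},\dots,\xi_{i_m})\le x\}$ and invoking the central limit theorem for $U$-statistics --- is already contained in Theorem~\ref{th:main}; the only points needing attention are the elementary identification of $f$ with the one-sided derivatives and the boundary value $t=0$, both disposed of above.
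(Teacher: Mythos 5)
Your proposal is correct and matches the paper's intended (unwritten) argument: the paper presents Corollary~\ref{co:main} as an immediate consequence of Theorem~\ref{th:main}, precisely because continuity of $f$ at $\Ht_p$ identifies both one-sided derivatives $F'(\Ht_p\pm)$ with $f(\Ht_p)>0$, as you verify via $F(x)-F(\Ht_p)=\int_{\Ht_p}^{x}f(u)\,du$. Your additional sandwich argument for $t=0$ is a point the paper silently glosses over, and it is handled correctly.
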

The continuity assumption for $f$ is required, as otherwise, $f$ could differ from $F'$ on a set with 0 mass.

\begin{remark} \label{re:conditioning}
    For $\mathcal{S} = \R$, by conditioning on the common random element $\xi_1$, $\zeta$ of (\ref{eq:zeta}) can be written as
    \begin{displaymath}
        \zeta = \int \left(\Prob{h(x, \xi_2, \dots, \xi_m) \leq {\tilde H_p}}\right)^2 dG(x) - p^2,
    \end{displaymath}
    where $G$ is the distribution function of $\xi_1$.
\end{remark}

\begin{remark}
    By setting $m = 1$, i.e. if $h$ is a function from $\mathcal{S}$ to $\R$, Theorem~\ref{th:main} implies Theorem A (p.~77) of Serfling~\cite{serfling80} on the asymptotic normality of the sample quantiles for i.i.d.~random variables.
\end{remark}

\begin{remark}
    By the central limit theorem for ordinary $U$-statistics (see e.g.~Hoeffding~\cite{hoeffding48} or Serfling~\cite{serfling80}), we are able to compare the asymptotic efficiency of the (robust) $U$-quantile-statistic (for $p = \thf$) with the asymptotic efficiency of $U_n$ given by~(\ref{eq:u-statisic}). Assume that the assumptions of Corollary~\ref{co:main} are fulfilled. Furthermore,
    assume that the density $f$ is symmetric about $\mu$, the random variables in~(\ref{eq:sequence}) have finite variance and
    \begin{displaymath}
        \zeta_1 = \E\left((h(\xi_1, \dots, \xi_m) - \mu) (h(\xi_1, \xi_{m+1}, \dots, \xi_{2m-1}) - \mu)\right) > 0.
    \end{displaymath}
    Then, the ordinary $U$-statistic $U_n$ based on kernel $h$ is asymptotically normal with variance $m^2\zeta_1$. Hence, by Corollary~\ref{co:main},
    \begin{displaymath}
        e(\Ht_{pn}, U_n) = f^2(\mu) \zeta_1 / \zeta.
    \end{displaymath}
\end{remark}

\section{Proof of Theorem \ref{th:main}} \label{se:proof}
We follow the proof of asymptotic normality of the usual $p$th quantile by Serfling (p.~78f) \cite{serfling80}, with the necessary adaptions.

\begin{proof}[Proof of Theorem \ref{th:main}]
    For fixed $t$ write
    \begin{align}
        G_n(t) &= \Prob{\frac{n^\thf\left(\Ht_{pn}-\Ht_p\right)}{A} \leq t} \nonumber\\
               &= \Prob{\Ht_{pn} \leq \Ht_p + t A n^{-\thf}} \nonumber\\
               &= \Prob{p \leq U_n(\D)} \label{eq:start},
    \end{align}
    where $A$ is a constant specified later and
    \begin{displaymath}
        U_n(\D) = {n \choose m}^{-1}\sum_{i_1 < \dots < i_m} \Ind\left\{h(\xi_{i_1}, \dots, \xi_{i_m}) \leq \Ht_{p} + t A n^{-\thf}\right\}
    \end{displaymath}
    is an ordinary $U$-statistic with expectation
    \begin{displaymath}
        \D = \Prob{h(\xi_{i_1}, \dots, \xi_{i_m}) \leq \Ht_{p} + t A n^{-\thf}}.
    \end{displaymath}
    By continuity of $F$ at $\Ht_p$, $\D \to p$ as $n \ti$.

    Since the kernel of $U_n(\D)$ is either 0 or 1, the third absolute moment $\lambda$ of $U_n(\D)$ exists. Furthermore,
    by continuity of probability functions (see e.g. p.~351 in Serfling \cite{serfling80}) and continuity of $F$ at $\Ht_p$, the quantity
     \begin{align*}
        \zeta_n &= \\
        &\Prob{h(\xi_1, \dots, \xi_m) \leq \Ht_p+tAn^{-\thf}, h(\xi_1, \xi_{m+1}, \dots, \xi_{2m-1}) \leq \Ht_p+tAn^{-\thf}} - \D^2
    \end{align*}
    converges to its limit $\zeta > 0$ as $n\ti$. Thus, for the normalized $U$-statistic
    \begin{displaymath}
        U^*_n(\D) = \frac{n^\thf \left(U_n(\D) - \D \right)}{m \zeta_n^\thf},
    \end{displaymath}
    by the Berry-Ess\'{e}en theorem for $U$-statistics by Callaert and Janson \cite{cal:jan78},
    \begin{equation}\label{eq:berry}
        \sup_{t \in \R} \left|\Prob{U^*_n(\D) \leq t} - \Phi(t) \right| \leq \frac{C\lambda}{n^\thf m^3\zeta_n^{3/2}}
    \end{equation}
    holds at least asymptotically as $n\ti$ for an universal constant $0 < C < \infty$.

    From (\ref{eq:start}) it follows that
    \begin{align*}
        G_n(t) &= \Prob{\frac{n^\thf\left(p-\D\right)}{m \zeta_n^\thf} \leq U^*_n(\D)} \\
                &= \Prob{U^*_n(\D) \geq -c_{nt}}
    \end{align*}
    with
    \begin{displaymath}
        c_{nt} = \frac{n^\thf\left(\D - p \right)}{m \zeta_n^\thf}.
    \end{displaymath}

    Clearly,
    \begin{align*}
        \Phi(t) - G_n(t) &= \Prob{U^*_n(\D) < - c_{nt}} - (1 - \Phi(t)) \\
                        & = \Prob{U^*_n(\D) < - c_{nt}} - \Phi(-c_{nt}) + \Phi(t) - \Phi(c_{nt}),
    \end{align*}
    and thus, by using the Berry-Ess\'{e}en bound (\ref{eq:berry}),
    \begin{displaymath}
        |G_n(t)-\Phi(t)| \leq \frac{C \lambda}{n^\thf m \zeta_n^{3/2}} + |\Phi(t) - \Phi(c_{nt})|.
    \end{displaymath}
    The first term on the right hand side vanishes as $n \ti$. It thus remains to show $c_{nt} \to t$ as $n\ti$. By
    \begin{align*}
        c_{nt} &= \frac{n^\thf\left(\D - p \right)}{m \zeta_n^\thf} \\
               &= \frac{tA}{m\zeta_n^\thf}\frac{F(\Ht_p + tAn^{-\thf}) - F(\Ht_p)}{tAn^{-\thf}},
    \end{align*}
    it follows, for $t > 0$ as $n \ti$,
    \begin{displaymath}
       c_{nt} \to \frac{tA F'(\Ht_p+)}{m \zeta^\thf}.
    \end{displaymath}
    Similarly, for $t < 0$ as $n\ti$,
    \begin{displaymath}
       c_{nt} \to \frac{tA F'(\Ht_p-)}{m \zeta^\thf}.
    \end{displaymath}
    Choosing
   \begin{displaymath}
       A = \frac{m\zeta^\thf}{F'(\Ht_p+)}
   \end{displaymath}
   if $t > 0$ and
   \begin{displaymath}
       A = \frac{m\zeta^\thf}{F'(\Ht_p-)}
   \end{displaymath}
   if $t < 0$, the claimed result follows.
\end{proof}

\section{Examples} \label{se:example}

\subsection{The Hodges-Lehmann estimator of location}
As an application of Theorem~\ref{th:main} (resp. Corollary~\ref{co:main}),
we deduce asymptotic normality of a slightly modified version of the
Hodges-Lehmann estimator~\cite{hodges63} of location and of a generalization.
The Hodges-Lehmann estimator is given by the median of all Walsh averages
\begin{displaymath}
    \frac{\xi_i + \xi_j}{2}, \quad 1 \leq i \leq j \leq n
\end{displaymath}
and estimates the location parameter associated with the one-sample Wilcoxon test, see e.g.~Hettmansperger~\cite{hettmansperger1998}.

If the Walsh averages with $i=j$ are dropped from the original definition of the Hodges-Lehmann estimator, this modification can be expressed easily as a $U$-quantile-statistic with $p = \thf$ and kernel
\begin{displaymath}
    h(x,y) = (x+y)/2.
\end{displaymath}

Let $\xi_1, \dots, \xi_n$ be i.i.d.~random variables with distribution $G$ and square integrable and continuous density $g$, symmetric about $0$ say, and $g(0) > 0$. Then $h(\xi_i,\xi_j)$, $1 \leq i < j \leq n$, with common d.f.~$F$ have continuous density $f$ and $f(0) > 0$ and thus Corollary \ref{co:main} can be applied directly. Clearly
\begin{displaymath}
    F(z) =\Prob{\frac{\xi_i+\xi_j}{2} \leq z} = \int \Prob{\xi_i \leq 2z - x}g(x)dx.
\end{displaymath}
Thus, by symmetry,
\begin{displaymath}
    f(0) = F'(0) = 2 \int g(x)^2 dx.
\end{displaymath}

The value of $\zeta$ is found easily by Remark \ref{re:conditioning}:
\begin{align*}
    \zeta + 1/4&= \int \left(\Prob{\xi_2 \leq x}\right)^2 g(x) dx  \\
          &= \E \left(G^2(X) \right) \\
          &= \E U^2
\end{align*}
for a standard uniformly distributed random variable $U$. Thus, $\zeta = 1/3 - 1/4 = 1/12 > 0$.

Corollary \ref{co:main} ensures asymptotic normality with mean 0 and variance $\sigma^2$
\begin{displaymath}
    \sigma^2 = \frac{m^2\zeta}{f^2(0)} = \left(12 \left(\int g^2(x) dx \right)^2\right)^{-1},
\end{displaymath}
which equals the corresponding result for the Hodges-Lehmann estimator, see Hettmansperger~\cite{hettmansperger1998}, p.~37.

In the same way, the asymptotic distributions for $U$-quantile-statistics with kernels
\begin{displaymath}
    h(x_1, \dots, x_m) = m^{-1} \sum_{i = 1}^m x_i, \quad m > 2,
\end{displaymath}
can be established by plugging
\begin{displaymath}
    f(0) = m \int \dots \int g(x_1 + \dots + x_{m-1}) g(x_1) \cdots g(x_{m-1}) dx_1 \cdots dx_{m-1}
\end{displaymath}
and
\begin{align*}
    \zeta &+ 1/4 \\
        = &\int \left(\int \dots \int G(x_1 + \dots + x_{m-1}) g(x_2)\cdots g(x_{m-1}) dx_1 \cdots dx_{m-1} \right)^2 g(x_1) dx_1
\end{align*}
into Corollary~\ref{co:main}.

\subsection{Median interpoint distance}
A geometric example of a $U$-quantile-statistic is given by the sample median $\hat{\theta}_n$ of all interpoint distances $\|\xi_i - \xi_j\|$ (with theoretical median $\theta$) of a sample of i.i.d.~points $\xi_1, \dots, \xi_n$ with continuous density $g$ in $\R^d$, $d \geq 1$. Distances are measured with respect to any fixed norm $\| \cdot \|$ on $\R^d$. The closed unit ball induced by this norm is denoted by $\Ball$ with surface $\Sphere$ and we write
\begin{align*}
    \{x + \theta \Ball\} &= \{y \in \R^d\!: \|y-x\| \leq \theta\} \ \text{and} \\
    \{x + \theta \Sphere\} &= \{y \in \R^d\!: \|y-x\| = \theta\}.
\end{align*}
The asymptotic normal distribution of $\hat{\theta}_n$ is established by Corollary~\ref{co:main} and Remark~\ref{re:conditioning}. The value of $\zeta$ is found via
\begin{align*}
    \zeta + 1/4 &= \int_{\R^d} \left(\Prob{\|\xi_1 - x\| \leq \theta}\right)^2 g(x) dx\\
                &= \int_{\R^d} \left(\Prob{\xi_1 \in \{x + \theta \Ball\}}\right)^2 g(x) dx \\
                &= \int_{\R^d} \left(\int_{\{x + \theta \Ball\}}g(y) dy\right)^2 g(x) dx,
\end{align*}
whereas the density $f$ of the random interpoint distance $\|\xi_1 - \xi_2 \|$ at $\theta$ is given by
\begin{align*}
    f(\theta)d\theta &= \Prob{\theta \leq \|\xi_1 - \xi_2 \| \leq \theta + d\theta} \\
                     &= \int_{\R^d} \Prob{\theta \leq \|\xi_1 - x\|  \leq \theta + d\theta}g(x) dx\\
                     &= \int_{\R^d} d\theta \left(\int_{\{x + \theta \Sphere\}}g(y)dy\right)g(x) dx,
\end{align*}
hence
\begin{displaymath}
    f(\theta) = \int_{\R^d} \left(\int_{\{x + \theta \Sphere\}}g(y)dy\right)g(x) dx.
\end{displaymath}

Corollary~\ref{co:main} ensures asymptotic normality.

\section*{Acknowledgements}
The author thanks Lutz Duembgen and Enkelejd Hashorva for their critical and fruitful review of the manuscript. You are great!

\newpage

\end{document}